\newcommand{\tc}[1]{\ensuremath{\mathcal{M}({#1})}}
\newcommand{\mt}[1]{\ensuremath{\colon \mathbb{F}^{#1}_{2} \mapsto \mathbb{F}_2}}
\newtheorem{theorem}{Theorem}[section]
\newtheorem{lemma}[theorem]{Lemma}
\newtheorem{definition}[theorem]{Definition}
\newcommand{\qw}[1][-1]{\ar @{-} [0,#1]}
\newcommand{\qwx}[1][-1]{\ar @{-} [#1,0]}
\newcommand{\control}{*!<0em,.025em>-=-<.2em>{\bullet}}
\newcommand{\ctrl}[1]{\control \qwx[#1] \qw}
\newcommand{\targ}{*+<.02em,.02em>{\xy ="i","i"-<.39em,0em>;"i"+<.39em,0em> **\dir{-}, "i"-<0em,.39em>;"i"+<0em,.39em> **\dir{-},"i"*\xycircle<.4em>{} \endxy} \qw}
\newcommand{\rstick}[1]{*!L!<-.5em,0em>=<0em>{#1}}
\newcommand{\lstick}[1]{*!R!<.5em,0em>=<0em>{#1}}
\newcommand{\Qcircuit}{\xymatrix @*=<0em>}
\begin{document}

\title{Reversible Logic Circuit Complexity Analysis via Functional Decomposition}
\author{Anupam Chattopadhyay}
\author{Anubhab Baksi}
\affil{School of Computer Engineering, Nanyang Technological University, Singapore}

\maketitle
\begin{abstract}
Reversible computation is gaining increasing relevance in the context of several post-CMOS technologies, the most prominent of those being Quantum computing. One of the key theoretical problem pertaining to reversible logic synthesis is the upper bound of the gate count. Compared to the known bounds, the results obtained by optimal synthesis methods are significantly less. In this paper, we connect this problem with the multiplicative complexity analysis of classical Boolean functions. We explore the possibility of relaxing the ancilla and if that approach makes the upper bound tighter. Our results are negative. The ancilla-free synthesis methods by using transformations and by starting from an Exclusive Sum-of-Product (ESOP) formulation remain, theoretically, the synthesis methods for achieving least gate count for the cases where the number of variables $n$ is $ < 8$ and otherwise, respectively.
\end{abstract}


\section{Introduction}
A Boolean function $f$ is of the form $f:\{0,1\}^{n}\rightarrow\{0,1\}$ (or equivalently $f:\mathbb{V}_{2}^{n}\rightarrow \mathbb{V}_{2}$). The output of the Boolean function $f$ can be represented as a string $s$ of ones and zeros. It can also be represented as a multivariate polynomial over $GF(2)$. This polynomial can be expressed as an exclusive disjunction ($\oplus$) of a constant $a_{0}$ and one or more conjunctions of the function argument. This is called the Exclusive Sum-Of-Product (ESOP) representation. An alternative representation of the ESOP form is known as the Algebraic Normal Form (ANF). The general ANF for a function $f(x_{1},..,x_{n})$ over $n$-variables can be written as,
\begin{equation*}
\begin{split}
f(x_{1},..,x_{n}) = & a_{0} \oplus a_{1}x_{1} \oplus \cdots \oplus a_{i}x_{i} \oplus \cdots \oplus a_{n}x_{n} \\
& \oplus \cdots \oplus a_{1,2,...,n}x_{1}x_{2}\cdots x_{n}\\
\end{split}
\end{equation*}

\subsubsection*{Reversible and Irreversible Boolean functions}
An $n$-variable vectorial Boolean function is termed \textit{reversible} if all its output patterns map uniquely to an input pattern and vice-versa. It can be expressed as an $n$-input, $n$-output bijection or alternatively, as a Boolean permutation function over the truth value set $\{0, 1, \ldots, 2^{n-1}\}$. An \textit{irreversible} Boolean function
$f_{irr}:\{0,1\}^{n}\rightarrow\{0,1\}^{m}$ with $n \ne m$ can also be made reversible with the help of additional output lines, which adds distinguishing patterns to the irreversible output. Correspondingly, additional inputs are added. If an input line is constant-initialized and the constant is recovered after the circuit execution then, it is termed as \emph{ancilla}. Otherwise, it is termed as \emph{garbage}.


\subsection{Reversible Logic Synthesis} \label{sec:rev_synth}
Reversible Boolean logic synthesis is achieved with the help of reversible logic gates. The gates are characterized by their implementation cost in quantum technologies, which is denoted as the Quantum Cost (QC). A reversible gate library is a complete set of reversible gates which can be used to implement any reversible circuit. For example the set of NOT, CNOT, controlled-$V$ and Controlled-$V^+$, known as NCV, is a reversible gate library widely used in the literature. Recently, there has been a significant research activity towards the realization of quantum circuits using Clifford+T gates, considering the importance of fault tolerance in quantum computing. Efficient synthesis for NCV circuits~\cite{mmd_swap}, Clifford+T circuits~\cite{amy_middle} and mapping of NCV circuits to Clifford+T gates~\cite{ncv_to_clifford} have been proposed in the literature. Few gates from these libraries are outlined below. For detailed discussion on primitive quantum gates and their universality, readers may refer to~\cite{elem_gates, barenco_2bit}.

\begin{itemize}[leftmargin=*]
 \item \textbf{NOT gate}: This is represented using the matrix $\bigl(\begin{smallmatrix} 0&1 \\ 1&0 \end{smallmatrix}\bigr)$.
 \item \textbf{CNOT gate}: CNOT$(a,b)$=$(a,a\oplus b)$. This gate can be generalized with $Tof_n$ gate, where first $n-1$ variables are used as control lines. Generalized Toffoli gates form an universal reversible logic gate library, termed as MCT gate library.
 \item \textbf{Hadamard gate}: The unitary transformation for Hadamard gate is, $\frac{1}{\sqrt{2}}\bigl(\begin{smallmatrix} 1&1 \\ 1&-1 \end{smallmatrix}\bigr)$.
 \item \textbf{T gate}: The unitary transformation for this gate is, $\bigl(\begin{smallmatrix} 1&0 \\ 0&e^{\frac{i\pi}{4}} \end{smallmatrix}\bigr)$.
 \item \textbf{Phase gate}: Also denoted by $S$ and performs the unitary transformation of $\bigl(\begin{smallmatrix} 1&0 \\ 0&i \end{smallmatrix}\bigr)$. NOT, CNOT, Hadamard, $S$ and $S^+$ gates together form the Clifford gate family.
 \item \textbf{Toffoli gate}/\textbf{CCNOT gate}: The gate flips the target if both the control lines are set: $Tof(a,b,c) = (a,b,c\oplus ab)$.
\end{itemize}

\subsection{Cost Models} \label{sec:cost}
For evaluating the performance of the synthesis tools and benchmark circuit implementations, different cost models have been proposed in the literature. The most basic cost model is the number of reversible logic gates needed for the implementation. These logic gates could be, with rather large number of control lines, such as, for Multiple-Control Toffoli gates (MCT) or Mixed-Polarity Multiple Control Toffoli (MPMCT) gates. To closely reflect the cost in terms of quantum gates, the QC value is computed for each of these gates, which is nothing but the number of 2-qubit gates~\cite{maslov_web} needed to implement these circuits. In recent fault-tolerant Quantum circuit implementations, the cost is estimated in terms of T-count, corresponding to the Cliffort+T realization. Logical depth or T-depth.

Finally, considering the cost and difficulty of implementing large number of qubits, one performance indicator is the total number of lines or ancilla/garbage count. It is well-known that there exists a trade-off between gate count/QC and ancilla count~\cite{integration_tradeoff,anupam_tradeoff}. It is trivial to show that with increasing ancilla count, MCT gates with smaller control lines can be realized, thereby establishing a clear trade-off between QC and ancilla. However, it not clear if such a trade-off exists for gate count vs ancilla.


\section{Background} \label{sec:existing_bounds}
For theoretical as well as practical interests, several researchers worked towards establishing the bounds on the gate count for reversible circuits. For selective classes of Boolean functions, much lower upper bound compared to generic Boolean functions has been reported in~\cite{constructive}. For reversible circuits of practical interest, e.g., modulo-exponentiation, addition, reversible circuit implementation with parameterized upper bounds have been established~\cite{takahashi_adder,markov_survey}. We outline the major results in the gate count upper bound for different synthesis methods.

\subsection{Upper Bound for Transformation-based Synthesis Methods} \label{sec:mmd_bound}
The gate cost upper bound for transformation-based synthesis methods, for $n$-variable reversible logic circuits, have been reported as following for two different gate libraries.
\begin{lemma}
The upper bound of gate count for MMD~\cite{mmd} using MPMCT library is $(n-1)2^n + 1$~\cite{mmd_tcad,mmd_swap}.
\end{lemma}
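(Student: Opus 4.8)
The plan is to analyze the MMD procedure~\cite{mmd} directly and count the MPMCT gates it emits, row by row. First I would recall its structure: it processes the truth table of the target permutation $f$ in increasing order of the input index $i=0,1,\ldots,2^{n}-1$, and at step $i$ it composes the current function with a short sequence of MPMCT gates so that the partially transformed function $g_{i}$ satisfies $g_{i}(j)=j$ for all $j\le i$, while leaving the outputs at rows $0,\ldots,i-1$ unchanged. The key structural observation, immediate from bijectivity, is the invariant: once rows $0,\ldots,i-1$ are fixed, the current function restricts to a permutation of $\{i,\ldots,2^{n}-1\}$, so the value $p_{i}:=g_{i-1}(i)$ that still needs correcting satisfies $p_{i}\ge i$.

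Next I would describe and justify the step-$i$ gadget, which proceeds in two phases. In Phase~A, for every bit position where $i$ has a $1$ but $p_{i}$ has a $0$, one applies an MPMCT gate whose target is that bit and whose positive controls are the $1$-positions of $p_{i}$; this gate fires at row $i$ (since $p_{i}$ carries all those controls) and at no earlier row $r<i$ (a row that fired it would contain all $1$-bits of $p_{i}$, hence $r\ge p_{i}\ge i$, a contradiction). In Phase~B, symmetrically, for every position where $i$ has a $0$ but the value produced after Phase~A still has a $1$, one applies an MPMCT gate with that bit as target and the $1$-positions of $i$ as controls; by the same argument no row $r<i$ is disturbed, and after Phase~B the row-$i$ output equals $i$ (for $i=0$ these degenerate to plain NOT gates). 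The number of gates emitted at step $i$ is therefore exactly the Hamming distance $d_{H}(i,p_{i})$.

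It then remains to bound $d_{H}(i,p_{i})$ and sum. If $p_{i}\ne i$, then because $p_{i}\ge i$ the most significant bit at which $i$ and $p_{i}$ differ is necessarily one where $p_{i}$ has a $1$ and $i$ has a $0$; calling its index $m^{*}$, all differing bits lie in $\{0,\ldots,m^{*}\}$ and $m^{*}\le t(i)$, where $t(i)$ denotes the index of the most significant $0$-bit of $i$. Hence the step-$i$ cost is at most $M(i):=t(i)+1$, with the convention $M(2^{n}-1)=0$ since the last row is forced. Finally, the number of indices $i$ with $t(i)=k$ is $2^{k}$ for $k=0,\ldots,n-1$ (the bits above $k$ are forced to $1$, bit $k$ is $0$, the lower $k$ bits are free) and the single leftover index $2^{n}-1$ contributes $0$, so the total gate count is at most
\begin{equation*}
\sum_{k=0}^{n-1}(k+1)\,2^{k}\;=\;(n-1)2^{n}+1,
\end{equation*}
which is the claimed bound.

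The part that will need the most care is the correctness check in the second paragraph: verifying that every Phase-A and Phase-B gate leaves all previously fixed rows untouched. This is exactly where the increasing processing order and the specific choice of control sets are used, and it is also the reason the per-step count cannot in general be pushed below a full Hamming distance; once this invariant argument is nailed down, the distance estimate $d_{H}(i,p_{i})\le t(i)+1$ and the closed-form summation are routine.
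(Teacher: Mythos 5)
Your proof is correct: the row-by-row invariant ($p_i\ge i$ on the unfixed suffix), the two-phase gadget costing exactly $d_H(i,p_i)$ gates without disturbing fixed rows, the per-row bound $t(i)+1$, and the summation $\sum_{k=0}^{n-1}(k+1)2^k=(n-1)2^n+1$ all check out. The paper gives no proof of its own for this lemma (it only cites the MMD literature), and your argument is essentially the standard worst-case analysis of the basic transformation-based algorithm found in those cited works, so nothing further is needed.
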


\begin{lemma}
The upper bound of gate count for MMD~\cite{mmd} using MPMCT(+Fredkin gate) library is $(n-2)2^n + 2 + n$~\cite{mmd_swap}.
\end{lemma}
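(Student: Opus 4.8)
\noindent\emph{Proof idea.} The plan is to re-run the MMD procedure with the larger gate library in hand and to argue that, on almost every row of the truth table, a single Fredkin (multiple-control controlled-swap) gate can stand in for two multiple-control Toffoli gates; since there are $2^n$ rows, this recovers almost exactly the gap $2^n - n - 1$ between $(n-1)2^n + 1$ and $(n-2)2^n + 2 + n$.

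First recall how the previous lemma is reached. MMD sweeps the input patterns $i = 0, 1, \dots, 2^n - 1$ in order; at step $i$ it inspects the current output $j = j(i)$ of row $i$, which satisfies $j \ge i$ because rows $0, \dots, i-1$ already output $0, \dots, i-1$ and the function is a bijection, and it appends MPMCT gates carrying $j$ to the target $i$ without disturbing any fixed row $r < i$. This is done in two phases: set every bit that is $1$ in $i$ and $0$ in $j$, turning $j$ into $j \vee i$; then clear every bit that is $1$ in $j$ and $0$ in $i$, turning $j \vee i$ into $i$. Each gate is controlled on enough $1$-positions that no $r < i$ satisfies all of its controls, so fixed rows stay put; step $i$ therefore costs at most the Hamming distance between $i$ and $j$. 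Counting the first row (only NOT gates, $\le n$ of them) and the final forced row separately and bounding every other row by $n - 1$ gives $(n-1)(2^n - 1) + n = (n-1)2^n + 1$.

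Now suppose step $i$ needs $s$ bits set and $c$ bits cleared. Whenever $s \ge 1$ and $c \ge 1$, pick a bit $a$ to be set and a bit $b$ to be cleared and replace the corresponding set-gate and clear-gate by one Fredkin gate with target pair $(a,b)$: since the current pattern has $j_a = 0$ and $j_b = 1$, exchanging positions $a$ and $b$ performs the set and the clear simultaneously. Iterating greedily, step $i$ now costs $\max(s,c)$ gates rather than $s + c$. Every non-trivial row has $c \ge 1$ (because $j \subseteq i$ together with $j \ge i$ forces $j = i$), and every such row that is not of the special type $i \subseteq j$ also has $s \ge 1$, hence saves at least one gate and costs at most $n - 2$. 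It remains to account for the rows of type $i \subseteq j$ — where the step is a pure run of clears — and for a few boundary rows; one checks that these, together with row $0$ and the last row, contribute only the additive term $2 + n$, so that the total comes down to $(n-2)2^n + 2 + n$.

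Two points need care, and the second is the real obstacle. First, for each pairing one must name a valid control set for the Fredkin gate — the natural candidate being the $1$-positions forced by the two MPMCT gates it supplants — and check, by the same minimality argument that validates MMD's controls, that no fixed row $r < i$ meets all of those controls with $r_a \ne r_b$; this is more delicate than for a Toffoli, since a controlled swap can change a row merely by exchanging two of its bits. Second, and this is where the bulk of the work lies, one must show that the greedy pairing can be organized so that the \emph{worst case} over target functions is still pushed down by one gate per row for all but $O(n)$ rows — i.e.\ that an adversarial function cannot force many rows into the unpaired regime $i \subseteq j$, nor into the extreme profiles $(s,c) = (n-1,1)$ or $(1,n-1)$ that also resist pairing down to $n-2$ — which is exactly the bookkeeping that pins the leading coefficient at $n-2$ rather than $n-1$.
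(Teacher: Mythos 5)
First, a point of reference: the paper does not prove this lemma at all --- it is imported verbatim, with a citation, from~\cite{mmd_swap}, so there is no in-paper argument to compare yours against. Judged on its own merits, your high-level mechanism is the right one and matches how the Fredkin/swap extension of MMD is usually motivated: at step $i$, whenever the current output $j$ requires both a bit to be set ($i_a=1$, $j_a=0$) and a bit to be cleared ($j_b=1$, $i_b=0$), one multiple-control Fredkin gate swapping positions $a$ and $b$ can replace the two MPMCT gates, so a row costing $s+c$ gates drops to roughly $\max(s,c)$.

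As a proof, however, your text has two genuine gaps, both of which you flag but neither of which you close. (i) You never exhibit a concrete control set for the merged Fredkin gate and prove that it leaves every already-fixed row $r<i$ invariant. Unlike a Toffoli, a controlled swap can alter a row that satisfies all controls simply because $r_a\ne r_b$, so the ``same minimality argument'' used for MMD's Toffoli controls does not transfer verbatim and must be redone; this is exactly the delicate point you name and then skip. (ii) More importantly, the counting that yields the stated bound is absent. Even with perfect pairing, a row with $s\ge 1$ and $c\ge 1$ costs $\max(s,c)\le n-1$, not $n-2$ (witness $(s,c)=(n-1,1)$), and rows with $s=0$ (i.e.\ $i$ a submask of $j$, a pure run of clears) enjoy no saving at all. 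Your assertion that an adversarial permutation cannot force many rows into these regimes, and that the exceptional rows together with row $0$ and the last row contribute exactly the additive $n+2$, is precisely the content of the lemma; saying ``one checks'' does not establish the leading coefficient $n-2$ or the constant term. Until that bookkeeping is carried out --- or the bound is simply cited from~\cite{mmd_swap}, which is all the paper itself does --- the claimed inequality is not proved.
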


\subsection{Upper Bound for BDD-based Synthesis Methods} \label{sec:bdd_bound}
The Binary Decision Diagram (BDD)-based synthesis method, originally proposed in~\cite{wille_bdd}, observed an gate count upper bound of $3.2^n$. Note that this method generates high number of ancilla, which is in the order of $O(2^n)$. By utilizing a bi-conditional decomposition, for selective Boolean functions, a tighter upper bound is proposed in~\cite{bbdd}, however, the worst-case upper bound remains the same for generic Boolean functions. Ancilla-free reversible logic synthesis for BDDs has been proposed recently~\cite{mathias_ancilla_free_bdd}, which does not report any theoretical bounds so far.

\subsection{Upper Bound for ESOP-based Synthesis Methods} \label{sec:esop_bound}
The tightest upper bound for reversible circuits have been reported in~\cite{nabilla_bound} following a mix of Young subgroups and Exclusive Sum-Of-Products (ESOP)-based synthesis approach. In the first phase of the synthesis with young subgroups, it is established that an $n$-variable reversible function can be implemented with a circuit of $(2n-1)$ single target gates. Further, it was shown that, an $n$-variable single target gate can be realized with at most $29\cdot2^{n-8}$ MPMCT gates, if $n \geq 8$. Therefore, the gate count upper bound for the reversible circuit is $29\cdot2^{n-8}\cdot(2n-1)$ MPMCT gates, if $n \geq 8$. The upper bound of the single target gate implementation is established by using a result from~\cite{gaidukov}. Note that this method does not require any ancilla line.

For the rest of the paper, we will closely follow the above synthesis method and focus on the complexity of a single-output Boolean function. We will assume the single target distribution obtained from Young subgroup method~\cite{nabilla_bound} and concentrate on the functional decomposition approach of Boyar et al~\cite{boyar_bool} to check for potential improvements.

\subsection{Multiplicative Complexity} \label{sec:mult_complexity}
In parallel with the efforts for enumerating the reversible circuit complexity, progress in the analysis of classical Boolean functions in terms of multiplicative complexity is reported~\cite{boyar_bool}. The primary application for this analysis has been towards the design of cryptographic primitives, though.

\begin{definition}
The multiplicative complexity $C_\wedge(f)$ of a Boolean function is the minimum number of multiplications (AND-gates) that are sufficient to evaluate the function over the basis $(\wedge, \oplus, \neg)$.
\end{definition}

It has been established that the multiplicative complexity of functions having degree $d$ is at least $d-1$~\cite{schnorr_mult_c} and the multiplicative complexity of a randomly selected $n$-variable Boolean function is at most $2^{\frac{n}{2} + 1} - n/2 -2$ ($n$ even)~\cite{boyar_bool}.

Intuitively, multiplicative complexity of a Boolean function has some correspondence with the Toffoli gate. The number of AND gates in a single term of ESOP expression represents the number of control lines in a Toffoli gate. In the following sections, we follow the same approach of the derivation of multiplicative complexity and explore if it reduces the theoretical upper bound of gate count compared to the currently known bounds.

\section{Upper Bounds via Multiplicative Complexity Analysis} \label{sec:tighter_bound}
In this section, we first revisit the function decomposition approach used in~\cite[Section 4]{nabilla_bound} from a mathematical perspective. We show that their work can be described as solving a linear recurrence of the form: $$f(n) = 2f(n-1) + 2$$ with the initial conditions $f(5) = 10$, $f(4) = 4$.

The reason for adopting the recurrence relation, $f(n) = 2f(n-1) + 2$ is described in second paragraph of the proof of \cite[Theorem 1]{nabilla_bound}, so we skip that part. The case for $f(5)$ is described in~\cite{wille_sat}, which is cited in~\cite{nabilla_bound}.

Now, assume that, we want to solve: $f(n) = 2f(n-1) + 2$.
We substitute $n$ by $n+1$ to yield, $f(n+1) = 2f(n) + 2$.
Now, subtracting, we get $f(n+1) - f(n) = 2f(n) -2f(n-1)$ $\implies f(n+1) = 3f(n) - 2f(n-1)$.

Hence, the characteristic equation is: $x^2=3x-2 \implies (x-2)(x-1) =0 \implies x=1,2$, and hence, $f(n) = \alpha\cdot1^n+\beta\cdot2^n$, where $\alpha, \beta$ are constants to be determined from the initial conditions.

Now, given that, $f(5)=10, f(4)=4$, 
we generate the linear equations: $f(5)= \alpha + 32\beta = 10, f(4)=\alpha+16\beta=4$, which gives, $\alpha=-2, \beta=\frac38=3\cdot2^{-3}$.

Therefore, $f(n) = -2+3\cdot2^{-3}\cdot2^n = 3\cdot2^{n-3}-2$.

\subsection{Based on Function Decomposition}
A recursive decomposition procedure can be used for systematic implementation of reversible Boolean functions. From a given ESOP form, three following functional decompositions can be applied.
\begin{equation}
f = \overline{x_i}\cdot f_{x_i=0} \oplus x_i\cdot f_{x_i=1} \label{eqn-shannon}
\end{equation}
\begin{equation}
f = f_{x_i=0} \oplus x_i\cdot f_{x_i=2} \label{eqn-pdavio}
\end{equation}
\begin{equation}
f = f_{x_i=1} \oplus \overline{x_i}\cdot f_{x_i=2} \label{eqn-ndavio}
\end{equation}
where $f_{x_i=2} = f_{x_i=0} \oplus f_{x_i=1}$. These are known as Shannon, positive Davio and negative Davio decompositions respectively. Here, we adopt the positive Davio decomposition. The Shannon decomposition leads to a $2$-MCT gate implementation, whereas both positive Davio and negative Davio can be realized with a single MCT or MPMCT gate respectively. Given a Boolean function $f \mt{n}$ in its ESOP representation, we implement ($\wedge$ and $\oplus$ gates) it with MCT gate library only. We denote the MCT gate complexity of $f$ by \tc{f}.

In~\cite[Lemma 5]{boyar_bool}, it was proved that it requires at most $2^n - n - 1$ AND ($\wedge$) gates for realizing all the positive minterms of $n$ Boolean variables. In that, authors proposed a circuit of alternating $\wedge$ and $\oplus$ gates. It is evident that this set of all possible minterms can also be implemented with only MCT gates, when $0$-initialized ancilla lines are used. A representative implementation is shown in the following Fig.~\ref{minterm}. This requires $2^n - n - 1$ MCT gates and total $2^n-1$ lines.

\begin{figure}[htb]
\centering
$
\Qcircuit @C=0.7em @R=.5em {
\lstick{x_1} & \ctrl{1} & \ctrl{2}   & \qw      & \ctrl{1} & \qw & \rstick{x_1}\\
\lstick{x_2} & \ctrl{2} & \qw        & \ctrl{1} & \ctrl{1} & \qw & \rstick{x_2} \\
\lstick{x_3} & \qw      & \ctrl{2}   & \ctrl{3} & \ctrl{4} & \qw & \rstick{x_3} \\
\lstick{0}   & \targ    & \qw        & \qw      & \qw      & \qw & \rstick{f = x_1x_2}\\
\lstick{0}   & \qw      & \targ      & \qw      & \qw      & \qw & \rstick{f = x_1x_3}\\
\lstick{0}   & \qw      & \qw        & \targ    & \qw      & \qw & \rstick{f = x_2x_3}\\
\lstick{0}   & \qw      & \qw        & \qw      & \targ    & \qw & \rstick{f = x_1x_2x_3}
}
$
\caption{Implementation of Minterms}\label{minterm}
\end{figure}
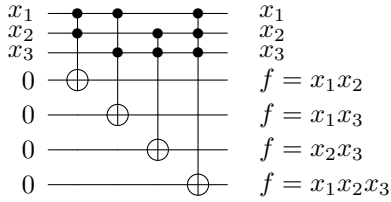

\begin{lemma} \label{lemma_xor}	
If all the positive minterms of $n$ variables are computed, then, for realizing $r$ different $n$-variable Boolean functions over the basis $(\wedge, \oplus, 1)$, at most $r(\frac{2^n-1}{2} - 1)$ MCT gates are needed.
\end{lemma}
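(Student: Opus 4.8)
The plan is to work entirely in the Algebraic Normal Form. Because the basis $(\wedge,\oplus,1)$ contains the constant, each of the $r$ target functions is uniquely $f = a_0 \oplus \bigoplus_{S\in\mathcal{S}_f}\prod_{i\in S}x_i$ with $a_0\in\{0,1\}$ and $\mathcal{S}_f$ a family of nonempty subsets of $\{1,\dots,n\}$. Every product $\prod_{i\in S}x_i$ is one of the $2^n-1$ positive minterms, which by hypothesis already sits on a line, so $f$ can be written onto a fresh $0$-initialised line using exactly $|\mathcal{S}_f|$ CNOTs (each a single-control MCT gate) plus at most one NOT to install $a_0$. This gives a crude per-function bound of about $2^n-1$, which is roughly twice the target; the real work is to halve it.

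To do so, I would exploit the identity $\bigoplus_{\emptyset\ne S\subseteq\{1,\dots,n\}}\prod_{i\in S}x_i = 1\oplus\prod_{i=1}^{n}(1\oplus x_i) = 1\oplus\overline{x_1}\,\overline{x_2}\cdots\overline{x_n}$, so that the XOR of \emph{all} $2^n-1$ minterms can be placed on one extra line $g$ at negligible cost ($O(n)$ MCT gates: negate the $n$ inputs, one $n$-control Toffoli, restore, one NOT) — or, better, folded into the minterm-preparation stage. Then, whenever a function's ANF uses more than half of the minterms, I realise instead $f = a_0\oplus g\oplus\bigoplus_{\emptyset\ne S\notin\mathcal{S}_f}\prod_{i\in S}x_i$: one CNOT to copy $g$ onto the output line, then one CNOT per minterm lying in the \emph{complement} family $\mathcal{S}_f^{\,c}$. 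Since $|\mathcal{S}_f|+|\mathcal{S}_f^{\,c}| = 2^n-1$, picking the cheaper of the two forms lets every function be realised with at most $\tfrac{2^n-1}{2}-1$ MCT gates, i.e. $\tc{f}\le\tfrac{2^n-1}{2}-1$. Summing over the $r$ functions (whose output lines are distinct and disjoint from the minterm lines, and whose computations only ever use those lines as controls, hence never disturb them) then yields the claimed $r\!\left(\tfrac{2^n-1}{2}-1\right)$.

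The main obstacle I anticipate is pinning the constant down exactly, rather than only up to an additive constant: one must check that the $O(n)$ overhead for the line $g$ really is charged to the minterm stage (or is dominated), that the single NOT for $a_0$ is absorbed, and — since $2^n-1$ is odd — that the threshold separating the ``few minterms'' and ``few missing minterms'' cases leaves no function stranded just above it. Once that accounting is set up, the remainder is the routine ANF/CNOT manipulation sketched above.
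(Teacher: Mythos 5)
Your proposal is correct in substance and follows essentially the same route as the paper's proof: precompute all positive minterms, then realize each of the $r$ functions independently by XOR-ing onto a fresh line either its own minterm set or (via the ``complete set'', your line $g$) the complementary set, whichever is smaller, so each function costs at most about half of the $2^n-1$ minterm CNOTs. The constant-level accounting you flag (the exact $-1$, the NOT for $a_0$, the copy of $g$) is not settled by the paper's proof either, which simply asserts that no function needs more than half of the minterms and that intermediate XOR results cannot usefully be shared, and then states the bound $\frac{2^n-1}{2}-1$ without deriving that constant.
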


\begin{proof}
Every function $\{f_i | 1 \leq i \leq r\}$ from the above set of $r$ functions can use a set of positive minterms to prevent formation of sharing co-factors. To begin with, we assume that the MCT of all minterms (except constant function) is computed, which requires $2^n-2$ MCT gates. Further, the minterm $0$ is made available as an ancilla line. Therefore, for computing each function, we need to look into the complete set and the zero set. In either case, the target function will never require more than half of the minterms to undergo exclusive-OR operation. However, if we assume that $f_1$ to need half of the minterms and $f_2$ to need the other half, there is a possibility that $f_3$ will require a few more minterms from the $f_1$ and a few minterms from $f_2$. In that way, storing the intermediate functions will not be useful and we get an upper bound of $(\frac{2^n-1}{2} - 1)$ MCT gates for each of the functions.
\end{proof}

We start with an idea similar to~\cite[Theorem 6]{boyar_bool}. Following positive Davio decomposition, $\forall f \mt{n}$, it is possible to find $f_1,f_{0} \mt{n-1}$ such that, $$ f(x_1,x_2,\ldots,x_n) = x_1f_1(x_2,\ldots,x_n) \oplus f_{0}(x_2,\ldots,x_n).$$ Alternatively, it can be stated that, $f$ is divided by $x_1$, with $f_1$ being the quotient and $f_{0}$ being the remainder. For example,  if $f(x_1,x_2,x_3) = x_1x_2x_3 \oplus x_1x_2 \oplus x_2x_3 \oplus x_1 \oplus x_2\oplus 1$, then $f_1(x_2,x_3)=x_2x_3\oplus x_2\oplus1$ and $f_{0}(x_2,x_3) = x_2x_3\oplus x_2\oplus1$. Also, notice that both $f_0, f_1$ are of (at most) $n-1$ variables.

Note that, if we can implement $f_1$ and $f_0$, then we can readily implement $f$ just by using one MCT gate (using $x_1, f_1$ as the control lines and $f_0$ as the target line, as depicted in Fig.~\ref{stp-1}).

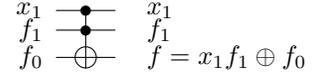
\begin{figure}[htb]
\centering
$
\Qcircuit @C=0.7em @R=.5em {
\lstick{x_1} & \ctrl{1} & \qw & \rstick{x_1}\\
\lstick{f_1} & \ctrl{1} & \qw & \rstick{f_1} \\
\lstick{f_0} & \targ & \qw & \rstick{f = x_1f_1 \oplus f_0}
}
$
\caption{Implementation of $f$ from $f_1$ and $f_0$ as inputs}\label{stp-1}
\end{figure}

Thus, we make the following observation: $$\tc f \leq 1 + \tc{\{f_1, f_0\}},$$ where \tc{\{f_1,f_0\}} denotes the MCT gate complexity of $f_1,f_0$, combined. Therefore, we can express the MCT gate complexity of an $n$-input function $f$ in terms functions of lower number of input variables. We can recursively use this decomposition:
\begin{align*}
{f} = &{x_1f_1\oplus f_{0}} \\
    = &{x_1(x_2f_{11}\oplus f_{10})\oplus (x_2f_{01}\oplus f_{00})} \\
    = &{x_1(x_2(x_3f_{111}\oplus f_{110})\oplus x_3f_{101}\oplus f_{100})} \\
      & \oplus {(x_2(x_3f_{011}\oplus f_{010})\oplus x_3f_{001}\oplus f_{000})}
\end{align*}
It can be noted that if we conduct this decomposition in steps of $4$ factors, instead of steps of $2$, exactly the same number of gates would be needed at the end. This is due to the fact that the $4$-factor decomposition, if implemented in reversible circuit, would need $3$ MCT gates.

In the above decomposition, at $k$-th step, we require $2^k-1$ additional MCT gates other than the ones needed by the new functions of lower input variables. Thus, after $k$-th step, we can write:
\begin{align}
&\tc{f} \notag\\& \leq 1 + \tc{\{f_1,f_0\}} \notag\\ & \leq 3 + \tc{\{f_{11}, f_{10}, f_{01},f_{00} \}} \notag
\\ & \hspace{.29\textwidth}\vdots  \notag\\ & \leq (2^k-1) + \tc{\{f_{i_1,i_2,\cdots,i_k}|{i_1,i_2,\ldots,i_k}\in \{0,1\}^k\}} \label{eqn-1}
\end{align}

Note that, all functions $\{f_{i_1,i_2,\cdots,i_k}|$ ${i_i,i_2,\ldots,i_k}\in$ $\{0,1\}^k\}$ are (at most) of $n-k$ variables, hence, $\{f_{i_1,i_2,\cdots,i_k}|$ ${i_1,i_2,\ldots,i_k}\in$ $\{0,1\}^k\}$ denotes some circuit with $n-k$ variables.

Therefore, using the lemma~\ref{lemma_xor} (here, $r = 2^k$) and the MCT gate count for the minterm constructions, in equation~\ref{eqn-1}, we can write
\begin{align}
& \tc{f} \notag\\
& \leq (2^k-1) + \tc{\{f_{i_1,i_2,\cdots,i_k}|{i_1,i_2,\ldots,i_k}\in \{0,1\}^k\}} \notag\\
& \leq (2^k-1) + \{2^{n-k} - (n-k) - 1\} + 2^k\{\frac{2^{n-k}-1}{2} - 1\} \notag\\
& \leq (2^k-1) + \{2^{n-k} - (n-k) - 1\} + 2^{n-1} - 3\cdot2^{k-1}
\label{eqn-AC}
\end{align}

Assuming $k=n/2$ to be the point, where we stop the functional decomposition and construct the functions from the available positive minterms, we obtain
\begin{align}
\tc{f} \leq 2^{n-1} + 2^{n/2-1} - n/2 - 2
\label{eqn-finalonegate}
\end{align}

This leads to an overall MCT gate complexity of
\begin{align}
(2n-1)(2^{n-1} + 2^{n/2-1} - n/2 - 2)
\label{eqn-finalMCT}
\end{align}
by following the Young subgroup decomposition. Corresponding bound for the garbage count is $O(2^{n/2})$, which is introduced for the minterm construction and for the realization of the functions once the decomposition procedure is stopped.

\begin{figure}[htbp]
\resizebox{0.9\columnwidth}{!}{
\centering
\includegraphics[width=\columnwidth]{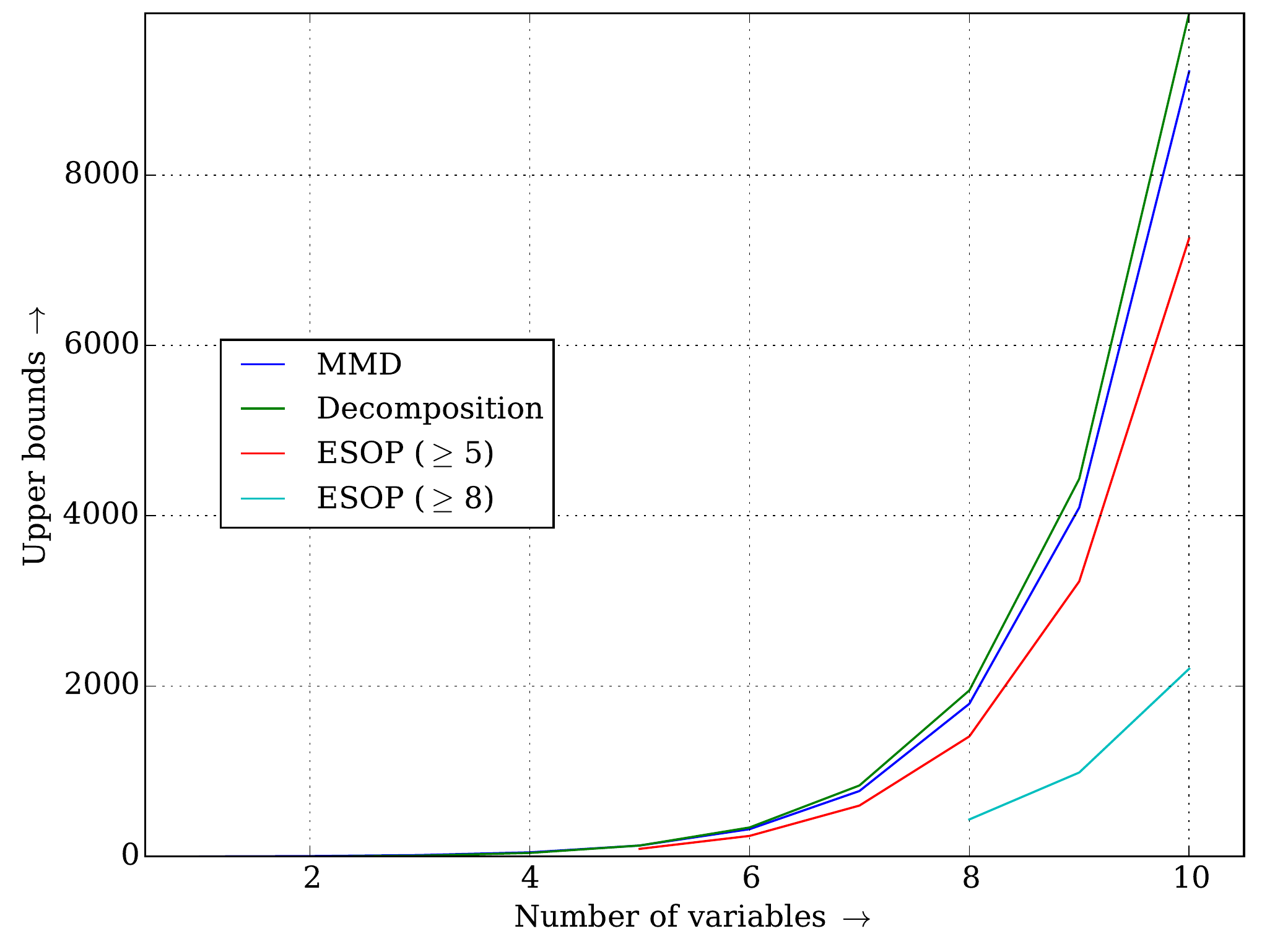}
}
\caption{Upper Bound of MCT Gate Complexity}\label{fig:complexity}
\vspace{-0.3cm}
\end{figure}

\subsection{Discussion}
It can be noted that the MCT gate complexity, derived through the approach of~\cite{boyar_bool} is worse compared to that of the transformation-based method (for $n<8$) and the ESOP-based method (for $n\geq 8$) (see Fig. \ref{fig:complexity}). This leads to the conclusion that introducing ancilla/garbage lines does not lead to any reduction in the gate count complexity, if functional decomposition approach is followed. On the other hand, different complexity bounds in different synthesis approaches indicate that it might be beneficial to adopt a hybrid synthesis approach for reversible circuit realization.

%
%
%

\bibliographystyle{IEEEtran}
\bibliography{rev}
\end{document}